\newtheorem{prop}{Theorem}[section]
\newtheorem{thm}{Theorem}[section]
\newtheorem{rem}{Remark}[section]
\newtheorem{definition}{Definition}[section]
\newtheorem{coro}{Corollary}[section]
\newcommand{\be}{\begin{eqnarray}}
\newcommand{\ee}{\end{eqnarray}}
\newcommand{\R}{\mathbb{R}}  
\newcommand{\C}{\mathbb{C}} 
\def\eg{{\it e.g. }} 
\def\ie{{\it i.e. }}
\newcommand{\ceil}[1]{\lceil #1 \rceil}
\def\eg{{\it e.g.}\ }
\def\ie{{\it i.e.}\ }
\numberwithin{equation}{section}
\begin{document}
\title{The fractional Dodson diffusion equation:\\
 a new approach}
	
		\author{Roberto Garra$^1$}
		\address{${}^1$ Department of Statistical Sciences, 
		 Sapienza, University of Rome. P.le Aldo Moro, 5,
		 Rome, Italy} 	
		\email{roberto.garra@sbai.uniroma1.it}
		
	    \author{Andrea Giusti$^2$}
		\address{${}^2$ Department of Physics $\&$ Astronomy, University of 	
    	    Bologna and INFN. Via Irnerio 46, Bologna, ITALY and 
	    	 Arnold Sommerfeld Center, Ludwig-Maximilians-Universit\"at, 
	    	 Theresienstra{\ss}e~37, 80333 M\"unchen, GERMANY.}	
 		\email{agiusti@bo.infn.it}
 		
 		\author{Francesco Mainardi$^3$}
    	    \address{${}^3$ Department of Physics $\&$ Astronomy, University of 	
    	    Bologna and INFN. Via Irnerio 46, Bologna, ITALY.}
			\email{francesco.mainardi@bo.infn.it}
 
    \keywords{Fractional Dodson equation, Fractional derivative of a function with respect to another function, Mittag-Leffler functions, Non-linear fractional diffusion.}

	\thanks{\textbf{Dedicated to Professor Tommaso Ruggeri on the occasion of his 70th anniversary.} \\
	To appear in \textbf{Ricerche di Matematica (2018)}, First online 17 January 2018, \\
	\textbf{DOI:} \href{https://doi.org/10.1007/s11587-018-0354-3}{10.1007/s11587-018-0354-3}.}	
	
    \date  {\today}

\begin{abstract}
   In this paper, after a brief review of the general theory concerning regularized derivatives and integrals of a function with respect to another function, we provide a peculiar fractional generalization of the $(1+1)$-dimensional Dodson's diffusion equation. For the latter we then compute the fundamental solution, which turns out to be expressed in terms of an M-Wright function of two variables. Then, we conclude the paper providing a few interesting results for some nonlinear fractional Dodson-like equations.     
       \end{abstract}
       
 \maketitle       
    
\section{Introduction}
     
     The Dodson diffusion equation arises in the context of cooling processes in geology \cite{dodson} and 
     it takes the form (see \eg \cite{crank}, pag.104-105)
     \begin{equation}\label{0}
     \frac{\partial c}{\partial t} = D_0 \, \exp(-\beta t) \, \frac{\partial^2 c}{\partial x^2}, \quad \beta = 1/\tau,
     \end{equation}
     where $\tau$ is the so called \textit{relaxation time}.
     
     Many experimental evidences show that, typically, in solid the diffusion coefficient depends on the temperature 
     according to the Arrhenius law. Nontheless, Dodson \cite{dodson} showed that, over a certain range, the temperature varies linearly with time leading to an approximate time-dependence of the diffusivity coefficient appearing in equation \eqref{0}. This represents a peculiar case of diffusion equation with time-dependent diffusivity.
     
     In the recent papers \cite{hristov1, hristov2}, Hristov studied the time-fractional generalization of the Dodson equation, obtained by replacing the first time-derivative appearing in \eqref{0} with a fractional derivative in the sense of Caputo, Riemann-Liouville and Caputo-Fabrizio.
      In particular, in \cite{hristov1} a formal fractionalization was performed whereas, in \cite{hristov2}, a generalized version of the Dodson equation was derived starting from first principles, i.e. the mass balance equation, by means of the exponential memory approach (i.e. Caputo-Fabrizio kernel). Furthermore, in the latter it is shown that, for large relaxation times, the considered model reduces to the ordinary Dodson equation.  
     
     One of the aim of these papers is, indeed, to compare the different approaches to fractional calculus and to obtain approximate solutions for the various fractionalizations of the Dodson equation. This problem is of particular interest since very few works on fractional diffusion equations with time-dependent diffusivity are present in the literature. Moreover, in this case, it is not simple to infer some exact solutions by means of standard methods like integral transforms. 
     
     From a physical point of view, the motivation for a fractional generalization of the Dodson equation is based on heuristic and explorative arguments that aim to understand the role of memory effects in more complex diffusion problems. In this perspective, the study of solid materials is of particular interest due to the success of fractional models in different fields of the applied sciences, see \eg \cite{CGM, noi, AG-FCAA, MMS}.
     Moreover, according to the present literature, memory effects play a relevant role in the physics of rocks (see for example \cite{geo} and references therein).
     
     In this paper, we adopt an alternative approach to the time-fractional Dodson equation. Specifically, we consider the following generalization of Eq.~\eqref{0}
     \begin{equation}\label{2}
     {}^C \left(e^{\beta t}\frac{\partial}{\partial t}\right)^\nu c(x,t) = D_0 \frac{\partial^2 c}{\partial x^2},
     \end{equation}
     involving a particular form of regularized Caputo-type fractional derivative of a function with respect to another function \cite{almeida}. The fractional operator appearing in \eqref{2} can be seen as a fractional power of the operator appearing in \eqref{0}, as we will explain in Section 2.
     We show the advantage of this approach in order to find the explicit form of the fundamenal solution and, therefore, to fully understand the role played by the memory with respect to the original model.
     By studying the generalized Dodson equation, we underline the relevance
     of fractional derivatives of a function with respect to another function to deal with mathematical models with time-dependent diffusivity coefficients. This mathematical tool can be useful also in relation to the studies of anomalous relaxation processes with time-dependent rate. For example, it can be used as an alternative approach to deal with the time-fractional model studied by de Oliveira et al. in \cite{meccanica}. Therefore, one of the aims of our paper is to underline the potential usefulness of fractional derivatives of a function with respect to another function to deal with realistic problems featured by time-varying coefficients and memory. We also remark that, lately, way too many definitions of fractional derivatives have been appearing in the literature, and that many of these are just specific realizations of more general (and physically relevant) fractional operators, see \eg the discussion in \cite{CG}.
     
    The paper is therefore organized as follows: first, we will find the similarity solution of the generalized Dodson equation \eqref{2} in terms of Wright functions, for which we will also provide some illuminating plots. Then, we will prove an interesting relation between the solution of the fractional Dodson equation and higher order diffusion equations with variable coefficients.
     
     \section{Preliminaries about Caputo fractional derivative of a function with respect to another function}
     
     Fractional derivatives of a function with respect to another function have been considered in the classical monograph by Kilbas et al. \cite{kilbas} (Section 2.5) and recently studied in detail by Almeida in \cite{almeida}.
     Here we briefly recall the main definitions and properties, refering to \cite{almeida} for a more complete discussion. We also observe that this approach has found some interesting applications in
     \cite{alm1}, \cite{noi} and \cite{orega} .
     
     \begin{definition}
      Let $\nu>0$, $I = (a,b)$ is the interval $-\infty \leq a < b \leq +\infty$, $f(t)\in C^1(I)$ an increasing function such that $f'(t)\neq 0$ for all $t\in I$, the fractional integral of a function g(t) 
          with respect to another function $f(t)$ is given by 
          \begin{equation}
          I^{\nu,f}_{a^+}g(t):=\frac{1}{\Gamma(\nu)}\int_a^t f'(\tau)
          (f(t)-f(\tau))^{\nu-1}g(\tau)d\tau.
          \end{equation}
     \end{definition}
    
     Observe that for $f(t) = t^{\alpha/\beta}$ we recover the definition of Erd\'elyi-Kober fractional integral that has recently foud many applications in various branches of physics and mathematics, see \eg \cite{gia1, FCAA2014}. In the case of $f(t)= \ln t$
     we have the Hadamard fractional integral and for $f(t)= t$, the Riemann-Liouville fractional integral (see \cite{kilbas}).
     
     The corresponding Caputo-type evolution operator is given in the following definition
     \begin{definition} \label{def-der}
     Let $\nu>0$, $n\in \mathbb{N}$,  $I = (a,b)$ is the interval $-\infty \leq a < b \leq +\infty$, $f(t), g(t)\in C^n(I)$ are two functions such that $f'(t)\neq 0 $ for all $t\in I$, the Caputo derivative of the function $g(t)$ with respect to the function $f(t)$ is given by
      \begin{equation}\label{2.1}
      {}^C \left(\frac{1}{f'(t)}\frac{d}{dt}\right)^\nu g(t) :=I_{a^+}^{n-\nu,f}\left(\frac{1}{f'(t)}\frac{d}{dt}\right)^n g(t),
          \end{equation}
          where $n = \ceil{\nu}$. 
     \end{definition}
    Notice that, we used a quite different notation with respect to the one adopted by Almeida in \cite{almeida}. This has been done in order to underline that one can, somehow, regard these operators as the fractional power counterpart of the operators with time-varying coefficients, \ie $\frac{1}{f'(t)}\left(\frac{\partial}{\partial t}\right)$.
     
     A relevant property of the operator (\ref{2.1}) is that  if $g(t) =(f(t)-f(a))^{\beta-1}$ with $\beta>1$, then (see Lemma 1 of \cite{almeida})
     \begin{equation}\label{2.2}
   {}^C \left(\frac{1}{f'(t)}\frac{d}{dt}\right)^\nu  g(t) = \frac{\Gamma(\beta)}{\Gamma(\beta-\nu)}(f(t)-f(a))^{\beta-\nu-1}.
     \end{equation}
     As a consequence, the composition of the Mittag-Leffler function with the function $f(t)$, that is
     \begin{equation} \label{2.4}
     g(t)= E_\nu(\lambda(f(t)-f(a))^\nu) = \sum_{k=0}^{\infty}\frac{\lambda^k((f(t)-f(a))^{\nu k}}{\Gamma(\nu k+1)},
     \end{equation}
     satisfies the equation
     \begin{equation}\label{eig}
     \displaystyle {}^C \left(\frac{1}{f'(t)}\frac{d}{dt}\right)^\nu g(t)= \lambda g(t),
     \end{equation}
     i.e. it is an eigenfunction of the operator $\displaystyle {}^C \left(\frac{1}{f'(t)}\frac{d}{dt}\right)^\nu$.\\
     The Mittag-Leffler functions play a central role in the studies about fractional differential equations and their applications, we refer for example to the recent monograph \cite{main} for a complete review on this topic.
     Compositions of Mittag-Leffler functions with other functions can be potentially useful, as recently discussed by Almeida in \cite{alm1}.
     The connection with integro-differential equations can give rigorous theoretical motivations for these studies.
     
     The operator appearing in the Dodson-type equation \eqref{2} corresponds to the particular choice of $f(t)$ in the general definition \eqref{2.2} such that $f'(t) =e^{-\beta t}$.  Hereafter we will take $a = 0 $ in the general definition.
     
     \section{The fractional Dodson equation}
     
     According to the definitions presented in the previous section, 
     we can now give the explicit expression for the operator appearing in our new generalization of the Dodson equation, that is
      \begin{equation}\label{3.1}
           {}^C \left(e^{\beta t} \, \frac{\partial}{\partial t}\right)^\nu  c(x,t) =\frac{1}{\Gamma(1-\nu)}\int_0^t \left(\frac{e^{-\beta \tau} - e^{-\beta t}}{\beta} \right) ^{- \nu} \, \frac{\partial c}{\partial \tau} \, d \tau,
    \end{equation}
 	for $\nu \in (0,1)$.
 	
 	Indeed, if $0 < \nu < 1$ and $a = 0$ then from Definition \ref{def-der} we immediately infer that
 	\begin{equation}
 	{}^C \left(\frac{1}{f'(t)} \, \frac{\partial}{\partial t} \right)^\nu c(x,t) = \frac{1}{\Gamma (1 - \nu)} \int _0 ^t \left( f(t) - f(\tau) \right)^{- \nu} \, 
 	\frac{\partial c}{\partial \tau} \, d \tau \, .
 	\end{equation}
 	Hence, if we now consider the operator in \eqref{3.1}, it is easy to see that $f' (t) = e^{- \beta t}$, from which it follows that
 	$$f(t) = - \frac{1}{\beta} \, e^{- \beta t} + K \, , $$
 	where $K \in \R$ is just an integration constant.

 	\begin{thm}
 	The fundamental solution of the generalized Dodson equation  
 	   \begin{equation}\label{3.2}
 	     {}^C \left(e^{\beta t}\frac{\partial}{\partial t}\right)^\nu c(x,t) =  \frac{\partial^2 c}{\partial x^2}, \quad \nu\in(0,1), \ x\in \mathbb{R}, t>0,
 	     \end{equation}
 	is given by
 	\begin{equation} \label{sol1}
 	c(x,t) = \frac{1}{2} \, \mathbb{M} _{\nu / 2} \left[ |x| , \, \left(\frac{1 - e^{-\beta t}}{\beta} \right) \right] \, ,
 	\end{equation}
 	where
 	\begin{equation}
 	\begin{split}
 	&M_{\alpha}(z) = \sum_{n=0}^\infty\frac{(-z)^n}{n!\Gamma(-\nu n+1-\nu)} \, , \qquad z \in \C \, , \,\, 0 < \alpha < 1, \\
 	&\mathbb{M}_{\alpha} (y , \, \lambda) = \frac{1}{\lambda ^\alpha} \, M _{\alpha} \left( \frac{y}{\lambda ^\alpha}\right) \, , \qquad y, \lambda \in \R ^+ \, ,
 	\end{split}
 	\end{equation}
 are the \textit{M}-Wright function\footnote{For further details on the M-Wright function, and its applications, we refer the interested reader to a very interesting surveys \cite{MMP, Gianni}. From an historical perspective, it is also worth remarking that \textit{M}-Wright function was first introduced in the occasion of the \textit{7th WASCOM} conference, see \cite{WASCOM}.} (also known as Mainardi function) and the \textit{M}-Wright function of two variables, respectively.
 	\end{thm}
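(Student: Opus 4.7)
The natural strategy is to reduce the generalized Dodson equation to the classical time-fractional diffusion equation via a change of time variable dictated by the structure of the fractional operator. Since the operator in \eqref{3.2} corresponds to $f'(t) = e^{-\beta t}$ with antiderivative $f(t) = (1-e^{-\beta t})/\beta$, chosen so that $f(0) = 0$, I would introduce the new time $T := f(t)$ and set $\tilde{c}(x,T) := c(x, f^{-1}(T))$.

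Starting from the explicit integral representation \eqref{3.1}, the substitution $s = f(\tau)$ in the $\tau$-integral immediately yields
\begin{equation*}
{}^C \left(e^{\beta t}\frac{\partial}{\partial t}\right)^\nu c(x,t) = \frac{1}{\Gamma(1-\nu)} \int_0^T (T-s)^{-\nu}\, \frac{\partial \tilde{c}}{\partial s}(x,s)\, ds,
\end{equation*}
i.e.\ the standard Caputo derivative of $\tilde{c}$ with respect to $T$. Since $\partial^2 c/\partial x^2$ is unaffected by the change of time variable, \eqref{3.2} reduces to the classical time-fractional diffusion equation
\begin{equation*}
\frac{\partial^\nu \tilde{c}}{\partial T^\nu}(x,T) = \frac{\partial^2 \tilde{c}}{\partial x^2}(x,T),
\end{equation*}
in the coordinates $(x,T)$, whose fundamental solution on $\R$ with initial datum $\delta(x)$ is the well-known Mainardi representation $\tilde{c}(x,T) = \tfrac{1}{2}\, T^{-\nu/2}\, M_{\nu/2}(|x|/T^{\nu/2})$, see \eg \cite{main}. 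Reverting to the original time via $T = (1-e^{-\beta t})/\beta$ and recalling the notation $\mathbb{M}_{\nu/2}[y,\lambda] = \lambda^{-\nu/2} M_{\nu/2}(y/\lambda^{\nu/2})$ then produces exactly \eqref{sol1}.

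The main technical point is to justify the substitution rigorously: one needs $\tilde{c}$ to be sufficiently regular in $s$ for the Caputo integral to make sense, and one must verify that the Dirac initial datum is preserved under the limit $T \to 0^+$, which is guaranteed by $f(0) = 0$, by $f$ being a $C^1$-diffeomorphism near the origin, and by the standard concentration property of the M-Wright kernel as $T \to 0^+$. Once these checks are in place, the theorem follows as a direct corollary of the structural observation already encoded in the eigenfunction-type property \eqref{eig}, namely that $(e^{\beta t}\partial_t)^\nu$ acts as an ordinary Caputo derivative in the variable $f(t)$.
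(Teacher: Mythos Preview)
Your proof is correct, but it follows a different route from the paper. The paper proceeds by taking the spatial Fourier transform of \eqref{3.2}, which turns the PDE into the fractional eigenvalue problem ${}^C(e^{\beta t}\partial_t)^\nu \widehat{c}(k,t) = -k^2\,\widehat{c}(k,t)$; this is solved via the Mittag--Leffler eigenfunction property \eqref{eig}, giving $\widehat{c}(k,t) = E_\nu[-k^2((1-e^{-\beta t})/\beta)^\nu]$, and the result then follows from the known Fourier pair between $E_\alpha(-k^2\lambda^\alpha)$ and $\tfrac12\,\mathbb{M}_{\alpha/2}(|x|,\lambda)$. You instead bypass Fourier analysis entirely by the time substitution $T = (1-e^{-\beta t})/\beta$, which converts the operator into an ordinary Caputo derivative and reduces everything to quoting the fundamental solution of the standard time-fractional diffusion equation. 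Your argument is arguably more structural---it makes explicit that the generalized operator is nothing but a Caputo derivative in a reparametrized time, an observation the paper only alludes to in Remark~3.1---and it would carry over immediately to other choices of $f$. The paper's route, on the other hand, stays closer to the machinery developed in Section~2 and exhibits the Mittag--Leffler representation of the Fourier transform explicitly, which is useful in its own right.
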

 	
 	\begin{proof}
 	By taking the Fourier transform of \eqref{3.2}, we obtain 
 	   \begin{equation}\label{k}
 	     {}^C \left(e^{\beta t}\frac{\partial}{\partial t}\right)^\nu \widehat{c}(k,t) = -k^2 \widehat{c}(k,t),
 	     \end{equation}
    whose solution is given by 
    \begin{equation}\label{k1}
    \widehat{c}(k,t) = E_{\nu} \left[-k^2 \left(\frac{1-e^{-\beta t}}{\beta}\right)^\nu \right] \, ,
    \end{equation}
    which is a direct consequence of the discussion concerning the eigenvalue problem for the operator \eqref{2.1} presented in the previous section.
	
	Now, recalling that (see \cite{MMP})
	\begin{equation}
	\mathcal{F} \left\{ \frac{1}{2} \, \mathbb{M} _{\alpha/2} (|x|, \lambda) \, ; \, x \mapsto k \right\} = E _{\alpha} \left( - k^2 \, \lambda ^\alpha \right) \, ,
	\end{equation}	    
   where $\mathcal{F} [f(x) \, ; \, x \mapsto k ] = \widehat{f} (k)$, then it is easy to see that the inverse Fourier transform of \eqref{k1} provides exactly the claimed result. 
    \end{proof}      
 
For sake of clarity, in Figs.~\ref{Fig-1} and~\ref{Fig-2} we show some fixed-time plots of Eq.~\eqref{sol1}.
 
\newpage 
 
\begin{figure}[h!]
	\centering
	\includegraphics[scale=0.55]{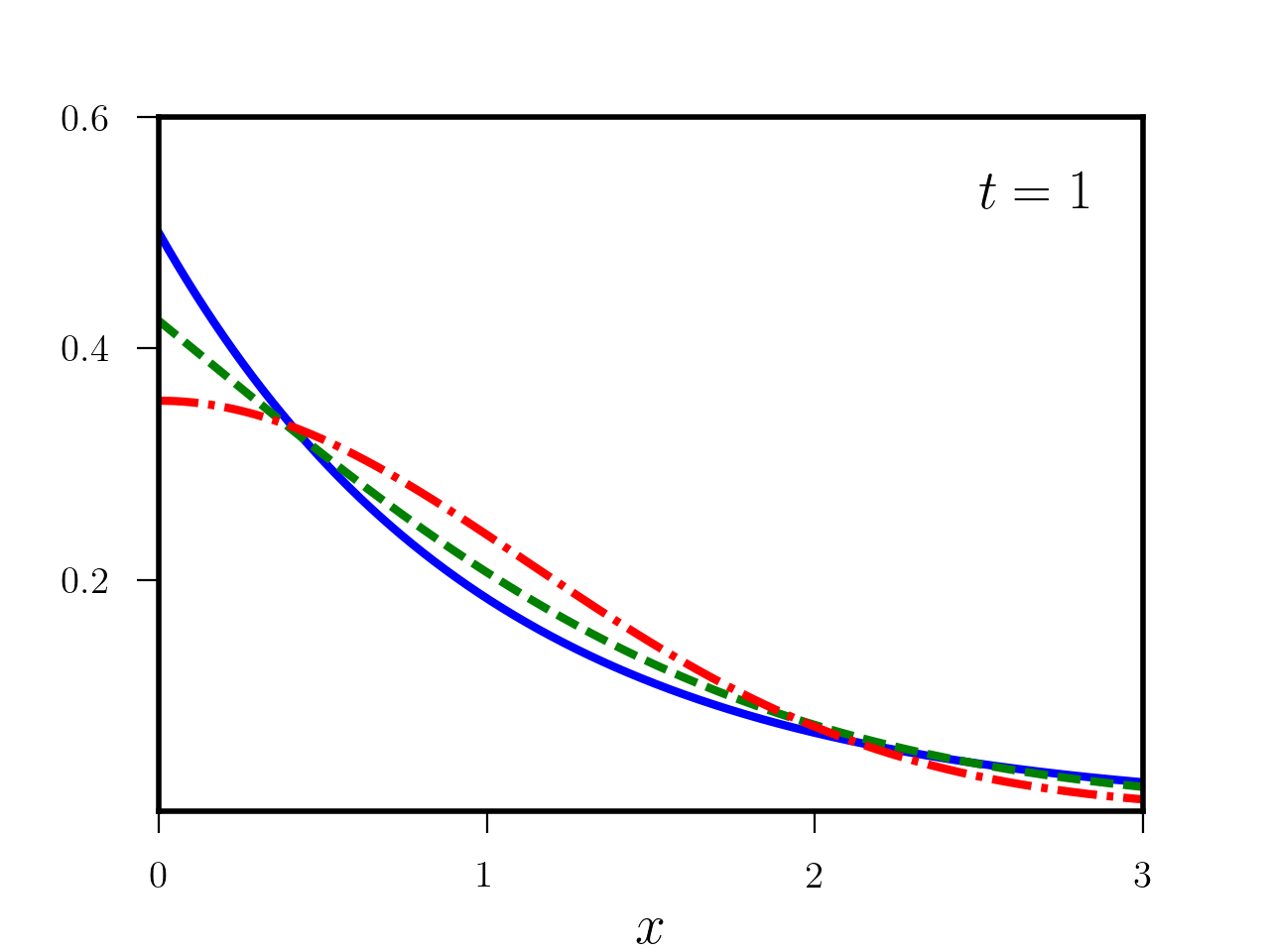}
	\caption{Plot of the solution \eqref{sol1} as a function of $x$ at a fixed time $t = 1$ (setting $\beta = 1$). The solid line corresponds to $\nu = 0.001$, the dashed line is obtained by setting $\nu = 0.7$ whereas the dash-point line is given by $\nu = 1$.} \label{Fig-1}
	\end{figure}
	
	\begin{figure}[h!]
	\centering
	\includegraphics[scale=0.55]{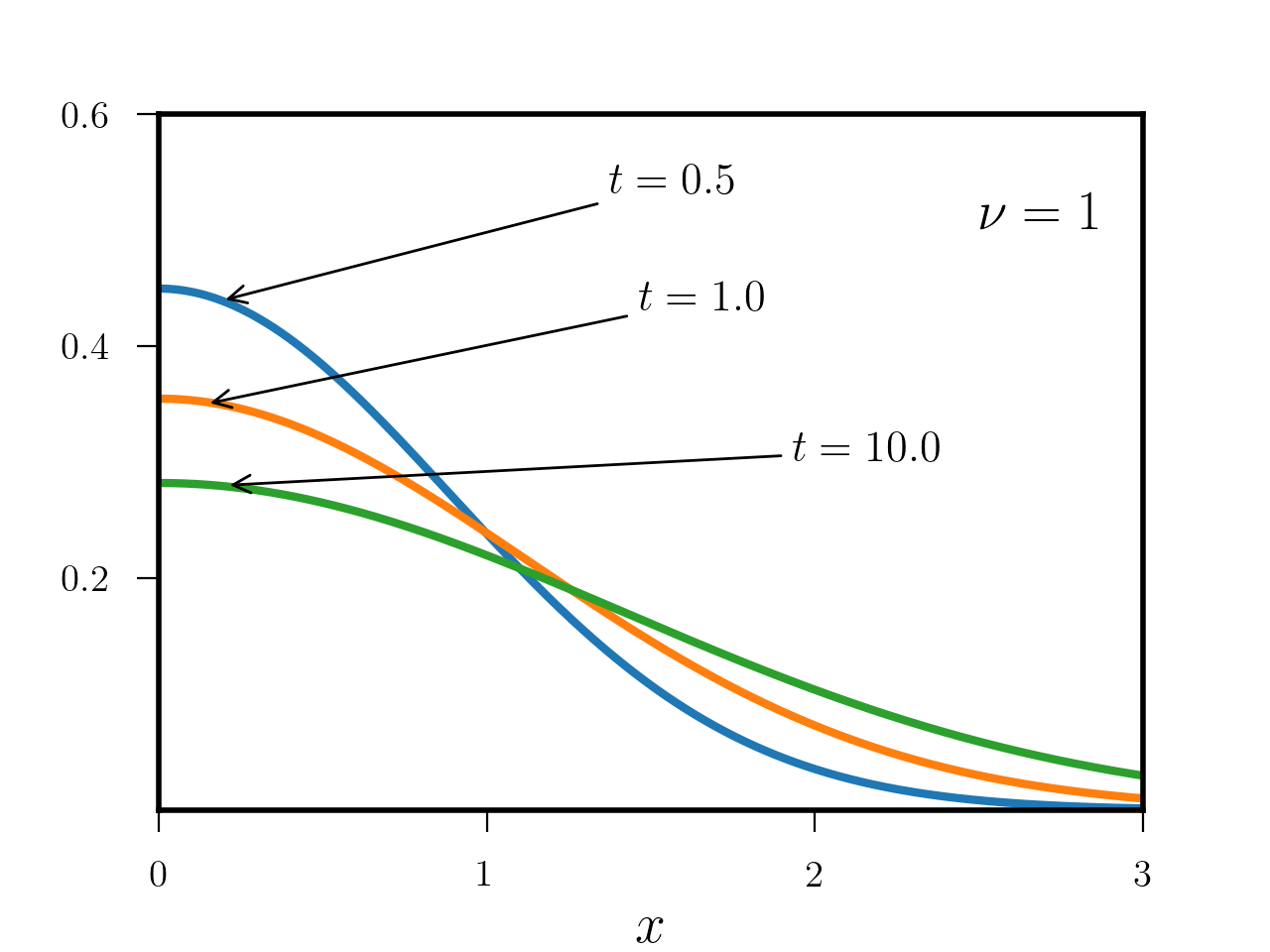}
	\caption{Plot of \eqref{gaussian} as a function of $x$ at different times $t = 0.5, 1.0, 10.0$ ($\beta = 1$).} \label{Fig-2}
	\end{figure}

	\begin{figure}[h!]
	\centering
	\includegraphics[scale=0.6]{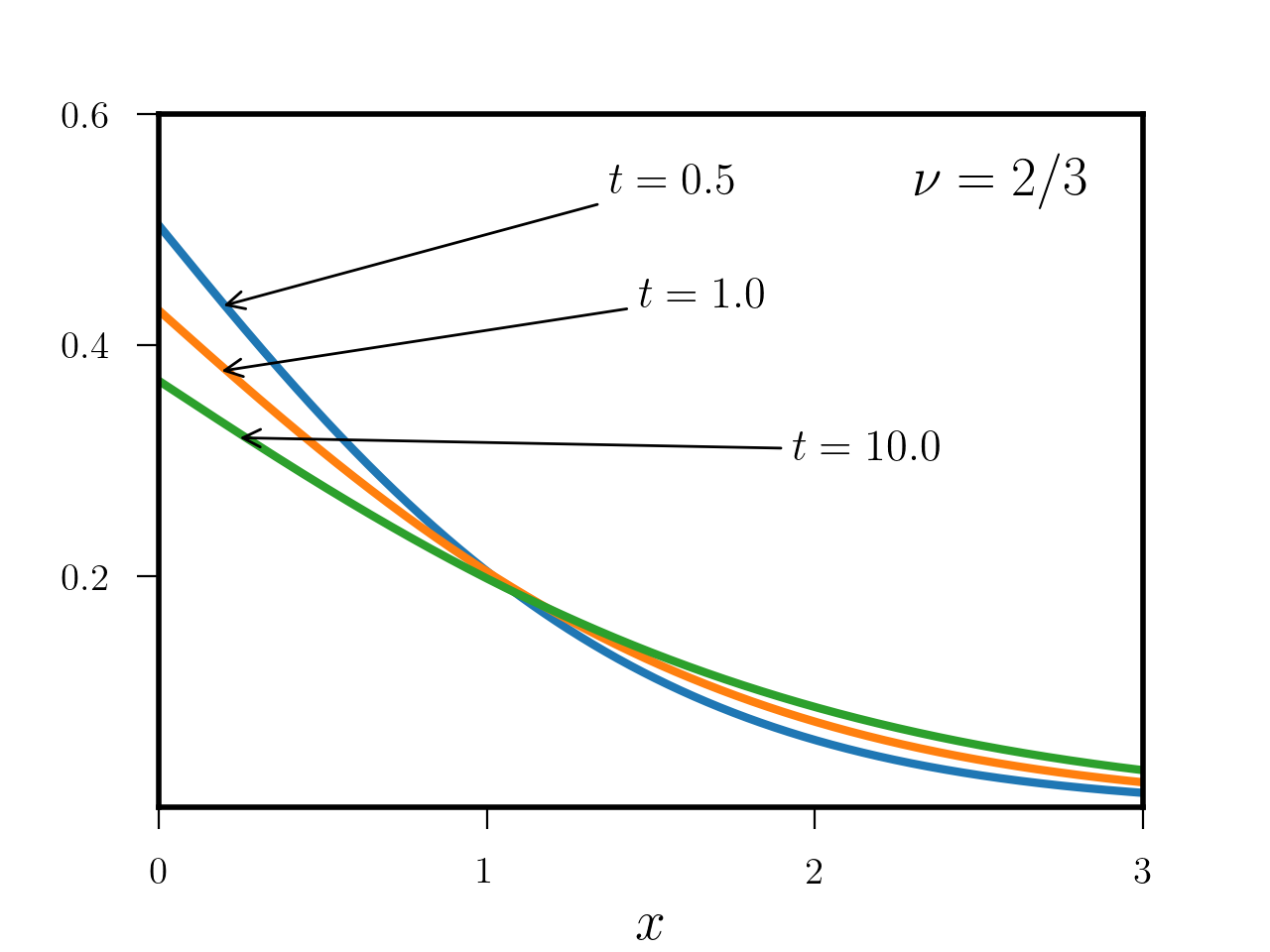}
	\caption{Plot of \eqref{Ai} as a function of $x$ at different times $t = 0.5, 1.0, 10.0$ ($\beta = 1$).} \label{Fig-3}
	\end{figure}

 	\begin{rem}
 	Observe that for $\nu = 1 $, we recover the exact similarity solution of the Dodson equation (see \cite{hristov1}, Section 3.2), that is given by 
 	\begin{equation} \label{gaussian}
 	c(x,t) =\frac{1}{\sqrt{4 \,\pi} \, \sqrt{1-e^{- t}}}  \, \exp\left(-\frac{x^2}{4(1-e^{- t})}\right),
 	\end{equation}
 	where we take $\beta = D_0 = 1$ for simplicity. By using our approach, we obtain, therefore, the composition of the classical solution of the time-fractional diffusion equation with a new time variable $t' = 1-e^{-\beta t}$. This is not surprising, since, also in the non-fractional case, we can reduce the Dodson equation to the classical heat equation by using this change of variable.
 	On the other hand the relevance of our approach stands in the facilitiy to obtain a simple solution to a fractional problem with time-varying diffusivity.  
 	\end{rem}      
     
     \begin{thm}
     The solution of the fractional Dodson equation \eqref{3.2} coincides with the solution to the higher order diffusion equation with time-dependent diffusivity, \ie
     \begin{equation}
     \frac{\partial c}{\partial t}  = (-1)^k e^{-\beta t} \frac{\partial^n c}{\partial x^n} \, ,
     \end{equation}
     when $\nu = 2/n$, for any $n\in \mathbb{N}$ such that $n>2$.
     \end{thm}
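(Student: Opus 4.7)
The plan is to show that the $M$-Wright fundamental solution \eqref{sol1} of the fractional Dodson equation is simultaneously, for $\nu = 2/n$, a similarity solution of the higher-order equation. The key preliminary step is the change of time variable $\tau(t) = (1-e^{-\beta t})/\beta$, for which $d\tau/dt = e^{-\beta t}$; this substitution absorbs the time-dependent diffusivity. The higher-order equation becomes the constant-coefficient PDE $\partial_\tau c = (-1)^k \partial_x^n c$, while the fractional Dodson equation \eqref{3.2} reduces, by virtue of the eigenfunction identity \eqref{eig}, to the classical time-fractional diffusion equation with Caputo derivative in $\tau$. For $\nu = 2/n$ the solution \eqref{sol1} is precisely the familiar self-similar form $c(x,\tau) = \frac{1}{2\tau^{1/n}}\, M_{1/n}\!\left(|x|/\tau^{1/n}\right)$.

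I then substitute this ansatz into the reduced higher-order equation. A direct chain-rule computation with $\xi = |x|/\tau^{1/n}$ produces $\partial_\tau c = -\frac{1}{2n\tau^{(n+1)/n}}\bigl(\xi\, M_{1/n}(\xi)\bigr)'$, and, for $x>0$, $\partial_x^n c = \frac{1}{2\tau^{(n+1)/n}}\, M_{1/n}^{(n)}(\xi)$, so that after canceling common $\tau$-factors the PDE becomes equivalent to the purely ordinary identity
\begin{equation*}
M_{1/n}^{(n)}(\xi) \;=\; \frac{(-1)^{k+1}}{n}\,\bigl(\xi\, M_{1/n}(\xi)\bigr)'.
\end{equation*}

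The heart of the proof is the verification of this ODE for the $M$-Wright function, which I attack directly from its series representation $M_{1/n}(\xi) = \sum_{m \geq 0} \frac{(-\xi)^m}{m!\, \Gamma(1-(m+1)/n)}$. Differentiating $n$ times term by term and shifting the summation index $m \mapsto m+n$ produces a series for $M_{1/n}^{(n)}(\xi)$ whose denominators involve $\Gamma(-(m+1)/n)$; the Gamma recurrence $\Gamma(1-(m+1)/n) = -\tfrac{m+1}{n}\Gamma(-(m+1)/n)$ then inserts an extra factor $(m+1)$, which is exactly the factor that appears when one differentiates $\xi M_{1/n}(\xi)$ term by term. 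Collecting the remaining constants yields $M_{1/n}^{(n)}(\xi) = \frac{(-1)^{n+1}}{n}\,(\xi M_{1/n}(\xi))'$, and comparing this with the ODE above fixes the sign via the parity condition $k \equiv n \pmod 2$ --- the implicit choice of sign in the statement. The main obstacle is precisely this series manipulation: once the Gamma recurrence is applied and the index shift performed, the rest is routine bookkeeping of chain-rule factors.
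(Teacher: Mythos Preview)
Your argument is correct and is precisely the ``plain calculation'' the paper alludes to without writing out: the paper offers no proof beyond the sentence ``The proof results by plain calculations,'' and your time change $\tau=(1-e^{-\beta t})/\beta$ followed by the series verification of the ODE $M_{1/n}^{(n)}(\xi)=\frac{(-1)^{n+1}}{n}\bigl(\xi M_{1/n}(\xi)\bigr)'$ is exactly how one carries it out. Your identification of the parity constraint $k\equiv n\pmod 2$ is also a useful clarification, since the paper leaves $k$ unspecified (it is consistent with Corollary~3.1, where $n=3$ gives $(-1)^k=-1$); one minor quibble is that the reduction of \eqref{3.2} to the ordinary Caputo equation in $\tau$ follows directly from the integral Definition~\ref{def-der} under the substitution, not from the eigenfunction identity \eqref{eig}.
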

   The proof results by plain calculations.
   
   \begin{coro}
    The fundamental solution of the fractional Dodson equation \eqref{3.2} for $\nu = 2/3$ and $\beta = 1$ (see Fig.~\ref{Fig-3}) is given by 
    \begin{equation} \label{Ai}
    c(x,t) = \frac{3^{2/3}}{2 \cdot\sqrt[3]{(1-e^{- t})}}Ai\left(\frac{|x|}{\sqrt[3]{3(1-e^{- t})}}\right) \, ,
    \end{equation}
    that corresponds to the solution of the linearized KdV equation with time-varying coefficient
    \begin{equation}
    \frac{\partial c}{\partial t} = -e^{-t}\frac{\partial^3 c}{\partial x^3}.
    \end{equation}
   \end{coro}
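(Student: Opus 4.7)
The plan is to obtain \eqref{Ai} by specializing the fundamental solution \eqref{sol1} to $\nu = 2/3$ and $\beta = 1$, and then rewriting the resulting \textit{M}-Wright function in terms of the Airy function via a classical identity. The statement about the linearized KdV equation will then follow immediately from the preceding theorem.

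Concretely, I would first plug $\nu = 2/3$ and $\beta = 1$ into \eqref{sol1}. Unfolding the definition of the two-variable \textit{M}-Wright function gives
\[
c(x,t) \;=\; \frac{1}{2}\,\mathbb{M}_{1/3}\bigl[|x|,\,1-e^{-t}\bigr]
\;=\; \frac{1}{2\,(1-e^{-t})^{1/3}}\,M_{1/3}\!\left(\frac{|x|}{(1-e^{-t})^{1/3}}\right),
\]
so the calculation reduces to a conversion between $M_{1/3}$ and the Airy function.

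The key analytic ingredient will be the classical identity
\[
M_{1/3}(z) \;=\; 3^{2/3}\,\mathrm{Ai}\!\left(\frac{z}{3^{1/3}}\right),
\]
which one can establish by matching the power series expansions term by term through the Gamma triplication formula (it is also recorded explicitly in the surveys \cite{MMP, Gianni} on the \textit{M}-Wright function). I expect this identification to be the only genuinely non-trivial step in the whole argument; inserting it into the previous display yields precisely \eqref{Ai}.

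Finally, the claim that the same $c(x,t)$ solves $\partial_t c = -e^{-t}\partial_x^3 c$ need not be verified from scratch: it is a direct instance of Theorem 3.2 with $n = 3$, which already guarantees that the fractional Dodson solution at $\nu = 2/n$ coincides with the fundamental solution of the corresponding higher-order diffusion equation with time-varying diffusivity $e^{-\beta t}$. The only point to confirm is that the sign $(-1)^k$ appearing in Theorem 3.2 reduces to $-1$ in the case $n = 3$, which is immediate.
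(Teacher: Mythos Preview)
Your proposal is correct and follows essentially the same route as the paper: specialize the fundamental solution \eqref{sol1} to $\nu=2/3$, $\beta=1$, invoke the identity $M_{1/3}(z)=3^{2/3}\mathrm{Ai}(z/3^{1/3})$ (the paper cites \cite{gianni} for this), and appeal to Theorem~3.2 with $n=3$ for the KdV correspondence. If anything, you are slightly more explicit than the paper, which states only that the corollary ``is a consequence of Theorem~3.2 and of the well-known fact'' about $M_{1/3}$, leaving the specialization of \eqref{sol1} implicit.
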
  
    The last Corollary is a consequence of Theorem 3.2 and of the well-known fact that (see for example \cite{gianni})
    \begin{equation}
    M_{1/3}(z) = 3^{2/3}Ai(z/3^{1/3}).
    \end{equation}
    We observe, for the sake of completness, that dispersion relation for time-fractional linearized KdV equations have been recently studied by Colombaro et al. in \cite{col}. 

    \section{Nonlinear fractional Dodson-type equations: some explicit results}
    
    In this section we provide some results for the following nonlinear counterpart of the fractional Dodson equation
    \begin{equation}\label{nld}
    {}^C \left(e^{\beta t}\frac{\partial}{\partial t}\right)^\nu c= \frac{\partial^2 c^m}{\partial x^2}, \quad m>0.
    \end{equation}
    This can be viewed as a time-fractional nonlinear diffusion equation with a time-dependent diffusivity. Observe that, when $\beta = 0$ it is essentially a time-fractional porous medium equation, recenty studied in the mathematical literature by Dipierro et al. in \cite{dipierro}.
    
    In this case we have the following result
    \begin{prop}
    Let $\nu \in (0,1)$, $m\in \mathbb{R}^+\setminus \{1\}$, then equation \eqref{nld} admits a solution of the form
    \begin{equation}
    c(x,t) = \bigg[\frac{(m-1)^2}{2m(m+1)}\frac{\Gamma(1-\frac{\nu}{m-1})}{\Gamma(1-\frac{\nu}{m-1}-\nu)}\bigg]^{\frac{1}{m-1}}\left[\frac{\beta^\nu \ x^2}{(1-e^{-\beta t})^\nu}\right]^{\frac{1}{m-1}}, \quad t >0.
    \end{equation}
    \end{prop}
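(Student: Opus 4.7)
The plan is to verify the proposed formula by direct substitution, treating it as a separable ansatz of the form
\begin{equation*}
c(x,t) = A \, x^{2/(m-1)} \, T(t), \qquad T(t) = \left[\frac{1 - e^{-\beta t}}{\beta}\right]^{-\nu/(m-1)},
\end{equation*}
with $A$ the constant to be determined. The spatial exponent $2/(m-1)$ is forced by the nonlinearity: in order for both sides of \eqref{nld} to exhibit the same power of $x$, one needs $\partial_x^2(x^{2m/(m-1)})$ to be proportional to $x^{2/(m-1)}$, which is precisely the classical scaling ansatz used for the porous medium equation. The shape of $T(t)$ is dictated by the key identity \eqref{2.2} applied to $f(t) = (1-e^{-\beta t})/\beta$, since this is the only family of time profiles on which the operator ${}^C(e^{\beta t}\partial_t)^\nu$ acts multiplicatively.

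First, I will compute the left-hand side of \eqref{nld}. Since the spatial factor $x^{2/(m-1)}$ is inert under $^C(e^{\beta t}\partial_t)^\nu$, I can pull it out and apply formula \eqref{2.2} with $\beta - 1 = -\nu/(m-1)$, i.e., with parameter $1 - \nu/(m-1)$, obtaining
\begin{equation*}
{}^C \left(e^{\beta t}\frac{\partial}{\partial t}\right)^\nu c = A \, \frac{\Gamma\!\left(1 - \tfrac{\nu}{m-1}\right)}{\Gamma\!\left(1 - \tfrac{\nu}{m-1} - \nu\right)} \, x^{2/(m-1)} \left[\frac{1 - e^{-\beta t}}{\beta}\right]^{-\nu/(m-1) - \nu}.
\end{equation*}
Next, I will compute the right-hand side by ordinary differentiation: $c^m = A^m x^{2m/(m-1)} T(t)^m$, and a direct calculation gives $\partial_x^2(x^{2m/(m-1)}) = \frac{2m(m+1)}{(m-1)^2}\, x^{2/(m-1)}$.

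A short bookkeeping step then shows that the time exponent on the right-hand side is $-\nu m/(m-1) = -\nu/(m-1) - \nu$, matching the left-hand side; likewise the spatial powers agree by construction. The equation therefore reduces to the algebraic condition
\begin{equation*}
A \, \frac{\Gamma\!\left(1 - \tfrac{\nu}{m-1}\right)}{\Gamma\!\left(1 - \tfrac{\nu}{m-1} - \nu\right)} = A^m \, \frac{2m(m+1)}{(m-1)^2},
\end{equation*}
which I will solve for $A^{m-1}$. Using $m \neq 1$ to take the $(m-1)$-th root yields precisely the constant in the statement, and rewriting $x^{2/(m-1)} \, T(t)$ as $[\beta^\nu x^2/(1-e^{-\beta t})^\nu]^{1/(m-1)}$ completes the proof.

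There is no real obstacle here beyond guessing the ansatz: the argument is routine once one recognizes that \eqref{2.2} plays the role of an eigenvalue relation on the time factor, and that the porous-medium scaling fixes the $x$-dependence. The only mild care needed is to check that the gamma-function arguments are meaningful (in particular $1 - \nu/(m-1) - \nu$ should not be a non-positive integer), which amounts to a genericity assumption on $(m,\nu)$.
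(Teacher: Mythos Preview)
Your proof is correct and follows essentially the same route as the paper: a separable ansatz $c(x,t)=f(t)\,x^{2/(m-1)}$, application of the power rule \eqref{2.2} to the time factor $[(1-e^{-\beta t})/\beta]^\gamma$, and matching of the resulting exponents and constants. The only cosmetic difference is that the paper leaves the exponent $\gamma$ undetermined at first and solves for it, whereas you fix $\gamma=-\nu/(m-1)$ from the outset by power matching; your added remark on the genericity of the gamma-function arguments is a welcome point the paper omits.
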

    \begin{proof}
    We consider the following \textit{ansatz} on the form of the similarity solution of \eqref{nld}
    \begin{equation}
  	c(x,t) = f(t)x^{\frac{2}{m-1}},  
    \end{equation}
    based on the invariance property
    \begin{equation}
    \frac{\partial^2 }{\partial x^2} \left(x^{\frac{2}{m-1}}\right)^m= \left(\frac{2m}{m-1}\right)\left(\frac{2m}{m-1}-1\right)x^{\frac{2}{m-1}} \, .
    \end{equation}
    By substitution it follows that $f(t)$ has to satisfy the following equation
    \begin{equation}\label{nld1}
    {}^C \left(e^{\beta t}\frac{\partial}{\partial t}\right)^\nu f(t)= 
    \left(\frac{2m}{m-1}\right)\left(\frac{2m}{m-1}-1\right)f^m(t) \, .
      \end{equation}
    We now assume that \eqref{nld1} admits a solution of the form
    \begin{equation}\label{cia}
    f(t) = C_1\left(\frac{1-e^{-\beta t}}{\beta}\right)^\gamma,
    \end{equation}
    where $\gamma$ and $C_1$ are two, yet unknown, functions of $m$ and $\nu$. 
    
    By substituting \eqref{cia} in \eqref{nld1} and taking profit of the property
    \eqref{2.2} of the fractional operator appearing in \eqref{nld}, we infer
    that
    \begin{align}
     \nonumber {}^C \left(e^{\beta t}\frac{\partial}{\partial t}\right)^\nu
    C_1\left(\frac{1-e^{-\beta t}}{\beta}\right)^\gamma &=
    C_1\frac{\Gamma(\gamma+1)}{\Gamma(\gamma+1-\nu)}\left(\frac{1-e^{-\beta
    t}}{\beta}\right)^{\gamma-\nu}\\
     &\nonumber =
    \left(\frac{2m}{m-1}\right)\left(\frac{2m}{m-1}-1\right)C_1^m
    \left(\frac{1-e^{-\beta t}}{\beta}\right)^{\gamma m}
    \end{align}
    and this equality is clearly satisfied only if
    \begin{align}
    \nonumber &\gamma= -\frac{\nu}{m-1},\\
    \nonumber & C_1  =
    \Bigg[\frac{1}{\left(\frac{2m}{m-1}\right)\left(\frac{2m}{m-1}-1\right)}\frac{\Gamma(1-\frac{\nu}{m-1})}{\Gamma(1-\frac{\nu}{m-1}-\nu)}\Bigg]^{\frac{1}{m-1}}
    \end{align}
    as claimed.  
    \end{proof}
    
    \begin{rem}
    Observe that the obtained solution can be considered as a \textit{fractional} counterpart of the similarity solution 
    $$ c(x,t) \sim \left(\frac{\beta x^2}{1-e^{-\beta t}}\right)^{\frac{1}{m-1}} $$
    of the nonlinear diffusive Dodson-type equation
    \begin{equation}
    \frac{\partial c}{\partial t}= e^{-\beta t}\frac{\partial^2 c^m}{\partial x^2}.
    \end{equation}
    \end{rem}
    
    \begin{rem}
     The sign of the solution obtained above clearly depends on $m$ and $\nu$. 
     Moreover, this similarity-type solution is peculiar of non-linear Dodson-like equations since, clearly, $m$ and $\beta$ have to be non-vanishing.
    \end{rem}
    
    We now consider the nonlinear diffusive equation \eqref{nld1} with a linear absorption term, for which we obtain the following
    \begin{prop}
    The equation 
    \begin{equation}\label{nld3}
        {}^C \left(e^{\beta t}\frac{\partial}{\partial t}\right)^\nu c= \frac{\partial^2 c^m}{\partial x^2}-c, \quad m>0,
        \end{equation}
     admits a solution of the form 
     \begin{equation}
     c (x, t) = E_{\nu}\bigg[-\left(\frac{1-e^{-\beta t}}{\beta}\right)^\nu\bigg] \, x^{1/m}
     \end{equation}
    \end{prop}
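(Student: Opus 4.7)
The plan is to substitute the separable ansatz $c(x,t) = F(t)\,x^{1/m}$ into \eqref{nld3} and exploit two independent facts: a purely algebraic cancellation of the nonlinear diffusion term, and the eigenfunction property of the regularized Caputo-type operator with respect to another function that is recalled around \eqref{eig}. First I would compute $c^m = F(t)^m\,x$, which is linear in $x$, so that $\partial_x^2 c^m \equiv 0$. This decouples space and time and reduces \eqref{nld3} to the purely temporal equation
\begin{equation*}
{}^C\!\left(e^{\beta t}\frac{\partial}{\partial t}\right)^{\!\nu} F(t) \,=\, -\,F(t),
\end{equation*}
after factoring $x^{1/m}$ out and dividing through.

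Next, I would use the fact, already recalled in Section~2, that the operator ${}^C(e^{\beta t}\partial_t)^\nu$ corresponds to the choice $f'(t) = e^{-\beta t}$ in Definition~\ref{def-der}, so that, having fixed $a=0$ as in the rest of the paper, one obtains $f(t)-f(0) = (1-e^{-\beta t})/\beta$. Invoking the eigenvalue identity \eqref{eig} together with the series representation \eqref{2.4} in the specific case $\lambda = -1$, the candidate
\begin{equation*}
F(t) \,=\, E_\nu\!\left[-\left(\frac{1-e^{-\beta t}}{\beta}\right)^{\!\nu}\right]
\end{equation*}
is precisely an eigenfunction of ${}^C(e^{\beta t}\partial_t)^\nu$ with eigenvalue $-1$, which closes the argument.

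The only potential obstacle is recognising the algebraic coincidence that $x^{1/m}$ annihilates the nonlinear diffusion term $\partial_x^2 c^m$; without this, the spatial and temporal dependencies would not decouple and one would be forced into a much more involved self-similar analysis. Once this observation is in place, the remaining step is a direct application of the eigenvalue formula established in the preliminaries, so no further nontrivial calculation is expected.
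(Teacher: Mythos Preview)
Your argument is correct and follows exactly the paper's approach: substitute the separated ansatz $c(x,t)=F(t)\,x^{1/m}$, use that $c^m=F(t)^m\,x$ is linear in $x$ so the nonlinear diffusion term vanishes, and then identify $F$ via the eigenfunction relation \eqref{eig}. In fact you make explicit the key cancellation $\partial_x^2 c^m=0$ that the paper leaves implicit when it jumps directly to the reduced temporal equation.
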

     \begin{proof}
     Let us assume that the equation admits a solution, via separation of variables, of the form $x^{1/m}f(t)$, then if we plug in this \textit{ansaz} we get
     \begin{equation}
     ^C \left(e^{\beta t}\frac{\partial}{\partial t}\right)^\nu  x^{1/m} f(t)= -x^{1/m}f(t),
     \end{equation}
     whose solution is given by (see \eqref{eig})
     \begin{equation}
     \nonumber f(t) =E_{\nu}\bigg[-\left(\frac{1-e^{-\beta t}}{\beta}\right)^\nu\bigg],
     \end{equation}
     as claimed.
     \end{proof}
     It is worth noting that this solution corresponds to a particular choice of the initial condition $c(x,0)= x^{1/m}$.
     Other exact results can be simply obtained by means of generalized methods of separation of variables. 
     
     	\section*{Acknowledgments}
		The work of the authors has been carried out in the framework of the activities of the National Group of Mathematical Physics (GNFM, INdAM).
	
	Moreover, the work of A.G. has been partially supported by \textit{GNFM/INdAM Young Researchers Project} 2017 ``Analysis of Complex Biological Systems''.

\end{document}